\documentclass[11pt]{amsart}

\usepackage{amsmath}
\usepackage{amsthm}
\usepackage{amsfonts}
\usepackage{amssymb}
\usepackage{enumerate}
\usepackage{asymptote}

\usepackage{graphics}
\usepackage{graphicx}
\usepackage{epsfig}
\usepackage{color}
\usepackage{verbatim}
\usepackage[ruled,vlined]{algorithm2e}
\include{pythonlisting}

\newtheorem{theorem}{Theorem}
\newtheorem{lemma}[theorem]{Lemma}

\newtheorem{claim}[theorem]{Claim}

\newcommand{\disc}{\mathsf{disc}}
\newcommand{\proj}{\mathsf{Proj}}

\usepackage[backend=bibtex8,style=alphabetic,maxnames=99,sorting=anyt]{biblatex}
\addbibresource{proj-references.bib}

\textwidth 16cm \textheight 24cm \oddsidemargin 0.1cm
\evensidemargin 0.1cm \topmargin -0.8cm

\parindent = 5 pt
\parskip = 12 pt

\begin{document}

\title{Algorithmic Discrepancy Minimization}
\date{May 12, 2019}
\author{Jonathan Liu and Michael Whitmeyer}

\maketitle

\section{Abstract}

This report will be a literature review on a result in algorithmic discrepancy theory. We will begin by providing a quick overview on discrepancy theory and some major results in the field, and then focus on an important result by Shachar Lovett and Raghu Meka in \cite{Lovett}. We restate the main algorithm and ideas of the paper, and rewrite proofs for some of the major results in the paper.

\section{Introduction}
The discrepancy problem is as follows: given a finite family of finite sets of points, our goal is to color the underlying points (contained in the union of all the sets in the family) red and blue, such that each set has a roughly equal number of red and blue points. Formally, it is described as follows:
Given a universe $[n] = \{1, ... ,n\}$ and a collection of sets in the universe $S = \{S_1,...,S_m \subseteq [n]\}$, We wish to find an assignment $\chi: [n] \rightarrow \{-1, 1\}$
such that $\disc(\chi)$ is minimized, where $\disc(\chi)$ is defined as
\[ \disc(\chi) = \max_{S_i \in S} \left|\sum_{i \in S_i} \chi(i)\right|. \]
Perhaps two of the most major results in Discrepancy Theory came in the 1980's, when two papers published proofs of the existence of assignments with surprisingly strong lower bounds. First, in 1981, J\'ozsef Beck and Tibor Fiala showed that given an upper limit on the number of sets that each point is included in, we can find an assignment with discrepancy linear in that limit.

\begin{theorem}[\cite{Beck-Fiala}]
We start with the assumption that for each $x \in [n]$, it appears in at most $t$ sets. More formally, we have the constraint that $\forall i \in [n]$,
$$|\{j; i\in S_j\}| \leq t$$
Then, one can find an assignment $\chi$ such that $\disc(\chi) \leq 2t-1$.
\end{theorem}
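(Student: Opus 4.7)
My plan is to use a partial-coloring walk. I would start from the fractional assignment $x = (0,\ldots,0) \in [-1,1]^n$ and continuously evolve it until every coordinate lies in $\{-1,1\}$; the final $x$ is then the desired $\chi$. Throughout the process I call a coordinate $i$ \emph{floating} when $|x_i| < 1$ and \emph{frozen} when $|x_i| = 1$, and I call a set $S_j$ \emph{dangerous} when it contains strictly more than $t$ floating coordinates. The central invariant to maintain is that $\sum_{i \in S_j} x_i = 0$ for every dangerous set $S_j$. Since coordinates only transition from floating to frozen, sets only transition from dangerous to non-dangerous, so once a constraint is dropped it stays dropped.

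To show the walk can always make progress, I would count incidences between floating coordinates and dangerous sets. Each floating coordinate appears in at most $t$ sets by hypothesis, so the total number of such incidences is at most $t \cdot (\text{\# floating})$, while each dangerous set contributes at least $t+1$ of them. This forces $(\text{\# dangerous}) < (\text{\# floating})$, so the linear system encoding the invariant on the still-dangerous sets has strictly fewer equations than floating unknowns. Therefore there is a nonzero direction $d$ supported on the floating coordinates in the kernel, and I can move $x \to x + \lambda d$ for an appropriate $\lambda$ of the correct sign until at least one new coordinate freezes. Iterating, the process terminates in at most $n$ steps.

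For the discrepancy bound, fix a set $S_j$ and consider the moment it first becomes non-dangerous (or time $0$, if $S_j$ was never dangerous to begin with). At that instant the invariant gives $\sum_{i \in S_j} x_i = 0$, and at most $t$ coordinates in $S_j$ are still floating, each with value strictly inside $(-1,1)$. From then on only these at most $t$ coordinates inside $S_j$ can change, and each one moves from a value in $(-1,1)$ to some value in $\{-1,1\}$, changing by strictly less than $2$. Hence the final value of $\sum_{i \in S_j} \chi(i)$ has absolute value strictly less than $2t$; being a sum of $\pm 1$'s it is an integer, so it is at most $2t-1$. Taking the maximum over $j$ gives $\disc(\chi) \le 2t-1$.

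The step I expect to be most delicate is the incidence-counting argument that keeps the constraint system underdetermined throughout the execution --- it is the linchpin guaranteeing that a kernel direction exists at every stage so the walk never gets stuck. The improvement from the natural bound of $2t$ down to $2t-1$ is not deep but does require carefully exploiting the strict inequality $|x_i| < 1$ for floating coordinates together with the integrality of $\sum_{i \in S_j} \chi(i)$.
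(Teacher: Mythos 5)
Your proposal is correct and is essentially the same linear-algebraic argument the paper uses (which it attributes to Chazelle): both maintain a fractional coloring with zero discrepancy on the ``dangerous''/``unstable'' sets, argue via incidence counting that those sets are always outnumbered by the floating variables so a kernel direction exists, and close with the observation that a set's discrepancy changes by strictly less than $2t$ after it stops being constrained, combined with integrality to get $2t-1$. Your write-up makes the incidence-counting step explicit where the paper merely asserts it, but the route is the same.
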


We provide a proof of the Beck-Fiala theorem in the appendix in \ref{Beck-Fiala}, using only arguments from linear algebra.

The other groundbreaking result in Discrepancy Theory is called Spencer's Six Standard Deviations and is given here:
\begin{theorem}[\cite{Spencer}]\label{Spencer}
Given any system of $n$ sets on a universe of $n$ points, there exists a coloring $\chi$ such that $\disc(\chi) \leq 6\sqrt{n}$.
\end{theorem}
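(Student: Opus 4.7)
The plan is to prove Theorem \ref{Spencer} via the \emph{partial coloring method}, originally due to Beck. The reduction is to establish the following lemma: for any set system of $m \le n$ sets on a ground set of size $N$, there exists a partial coloring $\chi : [N] \to \{-1,0,1\}$ with at least $N/2$ nonzero entries satisfying $\disc(\chi) \le c\sqrt{N}$ for an absolute constant $c$. Given this, Spencer's bound follows by an iterative "halving" procedure: apply the lemma to the currently uncolored set $U_k$ (starting with $U_0 = [n]$) to color at least half of its elements with discrepancy at most $c\sqrt{|U_k|}$, then recurse on the at most $|U_k|/2$ leftover elements. Since $|U_k| \le n/2^k$, after $O(\log n)$ rounds every element is fixed, and the total contribution to any set's discrepancy is bounded by a geometric series $\sum_{k\ge 0} c\sqrt{n/2^k} = O(\sqrt{n})$. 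Tracking constants carefully yields the bound $6\sqrt{n}$.

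The partial coloring lemma I would prove by a pigeonhole/entropy argument on the hypercube $\{-1,+1\}^N$. For each set $S_i$, pick a width parameter $\lambda_i > 0$ and bucket the value $\sum_{j \in S_i} \chi(j)$ into intervals of length $\lambda_i \sqrt{|S_i|}$. For a uniformly random coloring, this sum is subgaussian with variance $|S_i|$, so by a Chernoff bound the number of "typical" buckets that carry essentially all the mass is at most some $B_i$ depending on $\lambda_i$. Choosing the $\lambda_i$ so that $\prod_i B_i < 2^{N-1}$ (equivalently, so that the total entropy of the bucketed discrepancy vector is at most $N-1$ bits) forces, by pigeonhole, the existence of two distinct colorings $\chi_1 \ne \chi_2$ whose discrepancy vectors land in the same joint bucket. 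Then $\chi^\ast := (\chi_1 - \chi_2)/2 \in \{-1,0,1\}^N$ is a nonzero partial coloring with $|\sum_{j\in S_i} \chi^\ast(j)| \le \lambda_i \sqrt{|S_i|}$. A small refinement (restricting the hypercube to colorings that differ from a fixed reference in at least $N/2$ places, which is still exponentially large) upgrades this to the guarantee that $\chi^\ast$ has at least $N/2$ nonzero coordinates.

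The key calculation is the choice of the widths $\lambda_i$. To get a bound of the form $\disc(\chi^\ast) = O(\sqrt{N})$ uniformly across all sets (even when $m = N$), one takes $\lambda_i$ growing mildly in $|S_i|/N$: roughly $\lambda_i^2 \approx K \log(1 + m / (\text{number of sets of size comparable to } |S_i|))$. This nonuniform choice — spending more "entropy budget" on small sets and less on large sets — is what beats the naive $\sqrt{N \log m}$ one would get by a union bound, and is precisely where Spencer's theorem improves on the trivial random coloring.

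The main obstacle will be this entropy-balancing step in the partial coloring lemma: verifying that the subgaussian tail bound combined with the chosen $\lambda_i$ keeps the total number of occupied buckets below $2^{N-1}$ while simultaneously keeping $\max_i \lambda_i \sqrt{|S_i|} = O(\sqrt{N})$. Everything else — the Chernoff estimate, the pigeonhole step, and the geometric-series telescoping in the iteration — is routine once this balancing is in place.
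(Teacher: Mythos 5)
The paper you are reading never actually proves Spencer's theorem; it only cites \cite{Spencer} as background and instead proves an \emph{algorithmic} analogue (Theorem~\ref{result}) via Theorem~\ref{Coloring}. Your proposal, by contrast, is the classical \emph{nonconstructive} argument: a partial-coloring lemma established by a counting/entropy pigeonhole on the hypercube, followed by the standard geometric-series iteration. Both your proposal and the paper share the same outer loop (partial coloring plus iteration with $\sqrt{1-c}$ decay), but the key lemma is obtained by completely different means. You bucket the values $\sum_{j\in S_i}\chi(j)$ over a uniformly random $\chi\in\{-1,1\}^N$, bound the number of occupied buckets using subgaussian tails, and pigeonhole to find two colorings $\chi_1,\chi_2$ in the same joint bucket so that $(\chi_1-\chi_2)/2$ is a low-discrepancy partial coloring. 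The paper instead runs a discretized Brownian motion $X_t = X_{t-1}+\gamma U_t$ with $U_t\sim\mathcal{N}(\mathcal{V}_t)$ confined to a shrinking subspace, proving via Gaussian tail bounds, conditional second-moment estimates, and the constraint $\sum_j \exp(-c_j^2/16)\leq n/16$ that with constant probability at least $n/2$ coordinates are nearly frozen while all discrepancy constraints remain within $c_j$. What each approach buys: the entropy method is shorter and gives Spencer's sharp $6\sqrt{n}$ constant, but produces only an existence proof (there is no efficient way to locate the crowded bucket). The Lovett--Meka random walk is constructive and runs in polynomial time, at the cost of a worse (unspecified) universal constant $K$ and a considerably more involved analysis.

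One caveat on your partial-coloring lemma: the ``small refinement'' you describe — restricting the hypercube to colorings far from a fixed reference — does not directly guarantee that $\chi_1-\chi_2$ has $\ge N/2$ nonzero coordinates, since the Hamming distance of each $\chi_i$ from a fixed reference says nothing about $d_H(\chi_1,\chi_2)$. The standard repair is an isodiametric (Kleitman-type) argument: the pigeonhole step yields a single bucket containing at least $2^N/(\text{\#buckets})$ colorings; if this quantity exceeds the volume of a Hamming ball of radius $N/4$, then the bucket has diameter $\ge N/2$ and therefore contains two colorings at Hamming distance $\ge N/2$, which is exactly what makes $(\chi_1-\chi_2)/2$ have $\ge N/2$ nonzero entries. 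You should therefore aim the entropy-balancing step at ensuring the number of occupied buckets is below roughly $2^{(1-H(1/4))N}$ (with $H$ the binary entropy), not below $2^{N-1}$. With that correction, the outline is the right one, and it is precisely this entropy-balancing choice of the widths $\lambda_i$ — which you correctly flag as the crux — that yields the $O(\sqrt{N})$ bound even when $m=N$.
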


Both of these results remain cornerstones of Discrepancy Theory. Yet, despite their significance, they were both proven using nonconstructive methods, so we had no way to achieve them algorithmically. For some time, it was even conjectured that no algorithm could be provided. This question remained open until \cite{Bansal}, where Nikhil Bansal provides a constructive randomized algorithm for discrepancy minimization based on an SDP relaxation. Later, Lovett and Meka propose a new constructive algorithm using only linear algebra \cite{Lovett}. We will be focusing on this paper. 

\section{Overview}
The paper provides a constructive algorithm for minimizing discrepancy, and uses it to prove that their bounds match the bounds given by the previously mentioned theorems. First, they demonstrate a result matching Spencer's Six Standard Deviations.
\begin{theorem}[\cite{Lovett}]\label{result}
For any system of $m$ sets on a universe of $n$ points, there exists a randomized algorithm that, in polynomial time and with at least $1/2$ chance, computes a coloring $\chi: [n] \to \{-1, 1\}$ such that $\disc(\chi) < K\sqrt{n\log_2(m/n)}$ for some universal constant $K$.
\end{theorem}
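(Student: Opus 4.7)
The plan is to reduce Theorem \ref{result} to a partial coloring lemma that forms the algorithmic core of \cite{Lovett}, and which I would prove separately. The lemma states that for any starting point $x_0 \in [-1,1]^n$ with at least $n'$ strictly interior coordinates, and any thresholds $\delta_1,\ldots,\delta_m > 0$ satisfying a feasibility condition of the form
\[
\sum_{j=1}^{m} \exp\!\bigl(-\delta_j^2/16\bigr) \leq n'/16,
\]
there is a polynomial-time randomized procedure that outputs $x \in [-1,1]^n$ such that at least $n'/2$ of the previously interior coordinates become $\pm 1$ and $|\langle v_j,\, x - x_0\rangle| \leq \delta_j$ for every set-indicator vector $v_j = \mathbf{1}_{S_j}$, with success probability at least, say, $0.9$.

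Given this lemma, I would apply it iteratively. Let $n_k$ denote the number of unfrozen coordinates entering iteration $k$, so that $n_1 = n$ and $n_{k+1} \leq n_k / 2$. For iteration $k$ I would choose the uniform threshold $\delta^{(k)} = C\sqrt{n_k \log(m/n_k + 2)}$ with $C$ a sufficiently large universal constant, which makes the feasibility condition hold. After $O(\log n)$ iterations every coordinate lies in $\{-1,+1\}$, yielding a full coloring $\chi$. By the triangle inequality, the discrepancy of any set $S_j$ is at most $\sum_k \delta^{(k)}$; since $n_k$ shrinks geometrically while $\log(m/n_k + 2)$ grows only mildly, this sum is dominated by its first term and bounded by $K\sqrt{n\log(m/n)}$ for some universal $K$. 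A union bound across the $O(\log n)$ calls, each boosted by $O(\log\log n)$ independent repetitions, then yields overall success probability at least $1/2$, which is the bound claimed in the theorem.

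The main obstacle is proving the partial coloring lemma itself, which in \cite{Lovett} proceeds by an \emph{edge walk}: starting at $x_0$, one takes small Gaussian steps restricted to the linear subspace orthogonal to the currently tight constraints, where a constraint is either a saturated cube face $x_i = \pm 1$ or a saturated discrepancy slab $|\langle v_j,\, x - x_0\rangle| = \delta_j$. The hard part is a combined dimension-counting and martingale-concentration argument showing that, by time $T = \Theta(1/\gamma^2)$ with $\gamma$ the step size, enough cube constraints become tight while the feasibility hypothesis guarantees not too many slab constraints saturate prematurely. Establishing that the walk stays inside all slabs with the claimed probability, rather than getting trapped by discrepancy constraints before reaching the cube boundary, is the central technical difficulty, and the reason the threshold is chosen with the Gaussian tail $\exp(-\delta_j^2/16)$ in mind.
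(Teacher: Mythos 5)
Your overall strategy matches the paper's: invoke a partial coloring lemma with a feasibility condition of the form $\sum_j \exp(-c_j^2/16) \le n/16$, apply it iteratively to the unfrozen coordinates, and sum the per-round discrepancy bounds $\sqrt{n_k \log(m/n_k)}$ into a geometric-type series dominated by its first term. The boosting-and-union-bound treatment of the success probability across $O(\log n)$ rounds is also a reasonable (if slightly different) way to reach the $1/2$ figure.

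There is, however, a genuine gap in how you state the partial coloring lemma, and it hides a necessary step of the paper's argument. You assert that the lemma outputs $x\in[-1,1]^n$ in which at least $n'/2$ of the previously interior coordinates ``become $\pm 1$.'' The edge-walk algorithm of \cite{Lovett} cannot deliver this: a Brownian step almost surely does not land exactly on a cube face, so the algorithmic lemma (Theorem \ref{Coloring}) only guarantees $|x_i| \ge 1-\delta$ for at least $n/2$ indices, with $\delta$ a tunable slack parameter. After $O(\log n)$ rounds you therefore hold a vector all of whose coordinates are within $\delta$ of $\pm 1$ but are not a coloring, and you still owe a rounding step. The paper chooses $\delta = 1/(8\log m)$ and then performs a randomized rounding $\chi_i \in \{\pm1\}$ with $\mathbb{E}[\chi_i]=x_i$; setting $Y=\chi - x$, the quantities $Y_i$ are mean-zero, bounded by $2$, of variance $O(\delta)$, and a Chernoff bound gives $|\langle Y,v_j\rangle| \le \sqrt{n}$ for all $j$ simultaneously with probability $\ge 1/2$, which is where the final $1/2$ actually comes from. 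Note that a naive deterministic rounding to the nearest endpoint changes $\langle v_j, x\rangle$ by up to $n\delta$, which for $\delta=1/(8\log m)$ is far too large; so either the randomized rounding and concentration argument is required, or $\delta$ must be taken as small as $O(\sqrt{\log(m/n)/n})$, a choice you would then have to carry through the runtime bound. As a smaller point, the feasibility condition should be phrased in terms of the normalized thresholds $c_j = \delta_j/\|v_j\|_2$ rather than $\delta_j$ itself, since $\exp(-\delta_j^2/16)$ with $\delta_j=\Theta(\sqrt{n\log(m/n)})$ is vacuously tiny and does not capture the actual constraint the edge walk needs.
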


We note that for $m = n$, as in the case of Theorem \ref{Spencer}, we reach the same asymptotic bound as Spencer provided. 
\cite{Lovett} then provides a similar result for the "Beck-Fiala case" where the occurrence of each point is upper bounded

\begin{theorem}[\cite{Lovett}]
For a system of $m$ sets on a universe of $n$ points where each point is contained in at most $t$ sets, there exists a randomized algorithm that, in polynomial time and with at least $1/2$ chance, computes a coloring $\chi: [n] \to \{-1, 1\}$ such that $\disc(\chi) < K\sqrt{t}\log n$ for some universal constant $K$.
\end{theorem}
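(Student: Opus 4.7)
The plan is to invoke the Lovett-Meka partial coloring lemma iteratively, with slacks chosen to exploit the Beck-Fiala degree hypothesis. Recall the partial coloring lemma (the main technical result of the paper): given vectors $v_1,\ldots,v_m\in\mathbb{R}^n$, a starting point $x_0\in[-1,1]^n$, and slacks $\lambda_j\geq 0$ satisfying $\sum_j\exp(-\lambda_j^2/16)\leq n/16$, it outputs, in polynomial time and with constant probability, a point $x\in[-1,1]^n$ such that at least $n/2$ coordinates of $x$ lie in $\{-1,+1\}$ and $|\langle v_j,x-x_0\rangle|\leq \lambda_j\|v_j\|_2$ for every $j$.

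Starting from $x^{(0)}=0$, I would iterate this lemma in rounds. In round $k$, let $A_k\subseteq[n]$ denote the ``alive'' coordinates (those of $x^{(k-1)}$ still strictly in $(-1,1)$), set $n_k=|A_k|$, and let $v_j^{(k)}$ be the restriction of the indicator vector of $S_j$ to $A_k$, so $\|v_j^{(k)}\|_2^2 = s_j^{(k)} := |S_j\cap A_k|$. Each application of the lemma freezes at least half of the alive coordinates, so $n_{k+1}\leq n_k/2$ and after $O(\log n)$ rounds every coordinate is in $\{-1,+1\}$. Since frozen coordinates never change again, the discrepancy on $S_j$ telescopes: $|\sum_{i\in S_j}\chi(i)| \leq \sum_k \lambda_j^{(k)}\sqrt{s_j^{(k)}}$.

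The heart of the argument is choosing the slacks. I would take $\lambda_j^{(k)} = \sqrt{Ct/s_j^{(k)}}$ for a sufficiently large absolute constant $C$ (dropping sets with $s_j^{(k)}=0$), so that each round's contribution to the discrepancy of $S_j$ is exactly $\sqrt{Ct}$; summing over the $O(\log n)$ rounds immediately yields the desired $O(\sqrt{t}\log n)$ bound. To verify the Lovett-Meka slack condition in round $k$, sort the sets so that $s_1^{(k)}\geq s_2^{(k)}\geq\cdots$. The Beck-Fiala bound $\sum_j s_j^{(k)}\leq tn_k$ forces $s_j^{(k)}\leq tn_k/j$, so $\exp(-(\lambda_j^{(k)})^2/16)\leq \exp(-Cj/(16n_k))$, and summing this geometric series gives $\sum_j \exp(-(\lambda_j^{(k)})^2/16) \leq 16n_k/C \leq n_k/16$ as soon as $C$ is taken large enough.

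The main obstacle is exactly this slack calibration: the slacks must be simultaneously small enough to satisfy the Lovett-Meka constraint and large enough to be usable, while still contributing only $O(\sqrt{t})$ per round to each set's discrepancy. The Beck-Fiala hypothesis enters the analysis only through the sorted tail bound $s_j^{(k)}\leq tn_k/j$, which is what separates this argument from the Spencer-type bound of Theorem \ref{result}. Secondary issues, which I expect to be routine, are probability amplification (boosting each round's constant success probability by repetition and union-bounding over the $O(\log n)$ rounds to push overall success above $1/2$) and the bookkeeping for sets that become empty mid-process.
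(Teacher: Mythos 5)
The paper states this theorem but explicitly declines to prove it (``For the purposes of brevity, we only provide a proof for Theorem~\ref{result}''), so there is no in-paper proof to compare against; your proposal does, however, follow the approach taken in the original Lovett--Meka paper. Your argument is correct: the slack choice $\lambda_j^{(k)} = \sqrt{Ct/s_j^{(k)}}$ makes each round contribute exactly $\sqrt{Ct}$ to every surviving set, and the verification of the entropy condition via sorting and the bound $s_j^{(k)} \leq tn_k/j$ (which follows from $\sum_j s_j^{(k)} \leq tn_k$) is right; the geometric series $\sum_j \exp(-Cj/(16n_k)) \leq 16n_k/C$ gives the required $\leq n_k/16$ once $C \geq 256$. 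Two small points you elide, both routine. First, Theorem~\ref{Coloring} only delivers $|x_i| \geq 1-\delta$, not $x_i \in \{-1,1\}$, so after the $O(\log n)$ rounds you still need a rounding step; taking $\delta \leq 1/n$ (so that deterministically rounding each near-integral coordinate to the nearest of $\pm 1$ changes any set's discrepancy by at most $\delta|S_j| \leq 1$) handles this while keeping the runtime polynomial, and avoids the Chernoff-bound rounding analysis needed in the proof of Theorem~\ref{result}. Second, the frozen coordinates from round $k$ are literally unchanged in round $k+1$ because the lemma is applied on the subspace indexed by $A_{k+1}$, which is what makes the telescoping $\langle v_j, x^{(\mathrm{final})}\rangle = \sum_k \langle v_j^{(k)}, x^{(k)}-x^{(k-1)}\rangle$ valid; this is worth stating explicitly. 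The probability amplification you outline (boosting each round and union-bounding over $O(\log n)$ rounds) is standard.
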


In this review we will focus on Theorem 3. The main idea behind the algorithm will be to first create a \textit{partial coloring}. Given a set-system $(V,S)$, where $V=\{1,..., n\}$ and $|S| = m$, we assume that $m \geq n$ (the other case can easily be reduced to this case by adding some empty sets to $S$). We call this partial coloring $\chi : V \rightarrow [-1,1]$ such that:
\begin{enumerate}
    \item For all $S_i \in S, |\chi(S_i)| = O(\sqrt{n\log(m/n)})$
    \item $|\{i : |\chi(i)| = 1\}| \geq cn$ for a constant $c>0$
    
\end{enumerate}
The idea is that if we are provided a good algorithm for finding a partial coloring, we can repeatedly apply this algorithm on the variables not yet "colored" by this partial coloring, while holding the colored ones constant. This will eventually converge to a full coloring and total discrepancy, as the number of points colored follows a geometric series with ratio $\sqrt{1-c}$, and the discrepancy can be bounded by $O(\sqrt{n\log(m/n)})$.

\section{Achieving a Partial Coloring}

The first and most important step is actually achieving a good partial coloring. We start with a convenient construction: We let $v_1,...,v_m \in \mathbb{R}^n$ be the indicator vectors for each of our subsets $S_1,...,S_m$ respectively. Then, the discrepancy of our collection $S$ can be very easily described as 
$$\disc(S) = \max_{i \in [m]} |\langle \chi, v_i \rangle|$$
\begin{theorem}[Main Partial Coloring Result]\label{Coloring}
Let $v_1,...,v_m \in \mathbb{R}^n$ be vectors, and $x_0 \in [-1,1]^n$ be a "starting point". Further let $c_1,...,c_m \geq 0$ be thresholds such that $\sum_{j=1}^m\exp(-c_j^2/16) \leq n/16$. Then let $\delta>0$ be an approximation parameter. Then there exists an efficient randomized algorithm which with probability $\geq 0.1$ finds $x \in [-1,1]^n$ such that
\begin{enumerate}
    \item discrepancy constraints: \quad \quad  $|\langle x-x_0,v_j\rangle| \leq c_j \|v_j\|_2$ 
    \item variable constraints: \quad \quad $|x_i| \geq 1-\delta$ for at least $n/2$ indices $i \in [n]$
\end{enumerate}
Moreover, the algorithm runs in time $O((m+n)^3\delta^{-2}\log (nm/\delta))$
\end{theorem}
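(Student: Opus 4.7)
My plan is to realize the partial coloring as a discretized Gaussian random walk in $[-1,1]^n$. Starting at $X_0 = x_0$, I run $T = \Theta(\delta^{-2}\log(nm/\delta))$ steps of size $\gamma = \Theta(\delta/\sqrt{T})$. At each step $t$ I maintain two sets of ``active'' constraints: the frozen coordinates $V_t = \{i : |X_t^{(i)}| \geq 1-\delta\}$ and the nearly-tight discrepancy constraints $D_t = \{j : |\langle X_t - x_0, v_j\rangle| \geq c_j\|v_j\|_2 - O(\gamma\|v_j\|_2)\}$. Letting $W_t$ be the orthogonal complement inside $\mathbb{R}^n$ of $\{e_i : i \in V_t\} \cup \{v_j : j \in D_t\}$, I draw a standard Gaussian $G_t$ and set $X_{t+1} = X_t + \gamma\,\Pi_{W_t}(G_t)$. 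The walk terminates when $\dim W_t = 0$ or when $t = T$.

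The correctness argument has three parts. First, \emph{constraint preservation}: since each step is orthogonal to the gradients of all currently active constraints and the step has length $O(\gamma)$, the slack in the definition of $D_t$ absorbs any overshoot, so both $X_T \in [-1,1]^n$ and the discrepancy bounds hold at termination. Second, \emph{few discrepancy constraints get activated}: for each $j$, the process $Y^{(j)}_t := \langle X_t - x_0, v_j\rangle/\|v_j\|_2$ is a martingale whose increments are sub-Gaussian with variance proxy $\leq \gamma^2$, because the inner product of any unit vector with a standard Gaussian projected to a subspace is sub-Gaussian with variance at most $1$. Since $T\gamma^2 = \Theta(1)$, a maximal sub-Gaussian inequality (via reflection) yields $\Pr[\sup_{t\leq T}|Y^{(j)}_t| \geq c_j] \leq 2\exp(-c_j^2/16)$. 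Summing over $j$ and using the hypothesis, $\mathbb{E}|D_T| \leq n/8$, so by Markov $|D_T| \leq n/4$ with probability at least $1/2$.

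Third, \emph{many variables freeze}: I track the potential $\Phi_t = \|X_t - x_0\|_2^2$. Because $\mathbb{E}\|\Pi_{W_t}G_t\|^2 = \dim W_t \geq n - |V_t| - |D_t|$, the walk accumulates expected squared displacement at least $\gamma^2 \sum_{t<T}(n - |V_t| - |D_t|)$. Combined with the trivial bound $\Phi_T \leq 4n$, this forces $|V_T|$ to grow: if both $|V_t| < n/2$ and $|D_t|\leq n/4$ held throughout, then $\Phi_T$ would be $\Omega(T\gamma^2 n) \gg 4n$ with high probability, a contradiction. A union bound over the two good events yields success probability at least $0.1$.

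The main obstacle is the third step: rigorously lower-bounding the growth of $\Phi_t$ even though $W_t$ depends on the entire history of the walk. I would handle this via a stopping-time argument together with Azuma/Freedman-style concentration for the martingale part of $\Phi_t$; the key technical input is that for $i \notin V_t$, $\mathbb{E}[(\Pi_{W_t}G_t)_i^2 \mid \mathcal{F}_t]$ averages out to $\dim W_t / (n - |V_t|)$, so on average projection does not concentrate mass on any few unfrozen coordinates. The per-iteration cost is dominated by forming $\Pi_{W_t}$, an $O((m+n)^3)$ linear-algebra operation, so the total runtime is $O((m+n)^3 \delta^{-2}\log(nm/\delta))$, matching the claim.
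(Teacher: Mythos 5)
Your approach is structurally the same as the paper's: a constrained Gaussian random walk whose active constraint sets grow monotonically, a tail bound showing few discrepancy constraints freeze, and a potential argument showing the squared norm grows linearly in the dimension of the free subspace, forcing many variable constraints to freeze. However, there is a concrete gap in the constants that breaks the argument as you've written it.

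You track $\Phi_t = \|X_t - x_0\|_2^2$ and use the trivial bound $\Phi_T \leq 4n$ (since each coordinate of $X_T - x_0$ lies in $[-2,2]$). With $K := T\gamma^2$, the potential argument then gives roughly $\mathbb{E}[|V_T|] \geq n\bigl(1 - 4/K\bigr) - \mathbb{E}[|D_T|]$; to push the right side past $n/2$ even with the optimistic $\mathbb{E}[|D_T|] \leq n/8$ you need $K > 32/3 \approx 10.7$. On the other hand, your sub-Gaussian maximal inequality for $Y^{(j)}_t$ has variance proxy $T\gamma^2 = K$, so the tail reads $\Pr[\sup_t|Y^{(j)}_t| \geq c_j] \leq 2\exp(-c_j^2/(2K))$, and matching the hypothesis $\sum_j \exp(-c_j^2/16) \leq n/16$ forces $2K \leq 16$, i.e.\ $K \leq 8$. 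These two ranges are disjoint: the trivial $4n$ bound on the potential is not tight enough. The paper closes this gap by instead controlling $\|X_T\|_2^2$ (not $\|X_T - x_0\|_2^2$) and proving the sharper bound $\mathbb{E}[\|X_T\|_2^2] \leq n$: each coordinate, once it enters the frozen band, is stuck at a value of magnitude roughly $1 - \delta + O(\gamma)$, so $\mathbb{E}[(X_T)_i^2] \leq 1$ coordinate-wise. That factor-of-four savings is exactly what lets the constant $K_1 = 16/3$ satisfy both requirements simultaneously.

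Two smaller remarks. First, you do not actually need a maximal inequality (or a reflection argument, which does not apply directly to a general martingale anyway): by the monotonicity of the frozen sets, once $j$ enters $D_t$ the inner product $\langle X_s - x_0, v_j\rangle$ stops changing, so $j \in D_T$ is equivalent to a bound on the \emph{terminal} value, and an Azuma-type bound on $Y^{(j)}_T$ alone suffices — this is what the cited Bansal lemma does. Second, your final step appeals to high-probability concentration of $\Phi_t$ and a union bound over ``good events,'' but this is unnecessary machinery: since $|V_T| \leq n$ deterministically, a lower bound on $\mathbb{E}[|V_T|]$ such as $0.56n$ immediately gives $\Pr[|V_T| > n/2] \geq 0.12$ by a reverse-Markov calculation, which is then combined with the polytope-membership probability. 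No Freedman-style concentration is required.
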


The reason for the constraint on the $c_j$'s will become apparent later, but for now we note that the smaller the $c_j$ are, the stronger the theorem is.  In other words, we want them to be small, but they can't be too small otherwise the theorem won't hold, hence the constraint. We also note that we can increase the probability of success by simply running the algorithm multiple times over.

\begin{center}
    4.1. The Algorithm.
\end{center}
We begin with a general idea, before going into the details of the algorithm. We also assume without changing the problem that the $v_i$'s have all been normalized (we can simply adjust our $c_j$'s to account for this): $\|v_i\|_2 = 1, \forall i$. Consider the following polytope, which describes the legal values that $x\in \mathbb{R}^n$ can take on:
$$\mathcal{P} = \{x\in \mathbb{R}^n: |x_i|\leq 1 \forall i \in [n], |\langle x-x_0,v_j\rangle| \leq c_j\}.$$

Then the above theorem says we can find an $x\in \mathbb{R}^n$ such that at least $n/2$ of the variable constraints are satisfied with (virtually) no slack, and it works with good probability as long as we have $\sum \exp(-c_j^2) << n$. The idea is to take very small, discrete, Gaussian steps (called Brownian motion) starting from $x_0$. We intuitively want to use these steps to find such an $x$ that is as far away from the origin ($x_0$) as possible, as this implies that more of the constraints are satisfied with no slack. 

We are now ready to present the constructive algorithm that serves as a proof of Theorem \ref{Coloring}. Let $\gamma >0$ be a small step size such that $\delta=O(\gamma \sqrt{\log(nm/\gamma})$. The correctness of the algorithm will not be affected by the choice of $\gamma$, only the runtime. Further let $T=K_1/\gamma^2$, where $K_1 = 16/3$, and assume $\delta < 0.1$. The algorithm then produces $X_0 = x_0, X_1,...,X_T \in \mathbb{R}^n$ according to the following algorithm\\
\begin{algorithm}[H]
\SetAlgoLined
 
 \For{$t = 1,...,T$}{
  $\cdot$ Let $C_t^{var} \gets C_t^{var}(X_{t-1}) = \{i \in [n]: |(X_{t-1})_i| \geq 1-\delta\}$ be the set of variable constraints ``nearly hit" on the previous iteration\;
  $\cdot$ Let $C_t^{\disc} \gets C_t^{\disc}(X_{t-1}) = \{j \in [m]: |\langle X_{t-1}-x_0,v_j\rangle| \geq c_j-\delta\}$ be the set of discrepancy constraints "nearly hit" on the previous iteration\;
  $\cdot$ Let $\mathcal{V}_t \gets \mathcal{V}(X_{t-1}) = \{u \in \mathbb{R}^n: u_i=0 \forall i \in C_t^{var}, \quad \langle u, v_j\rangle = 0 \forall j \in C_t^{\disc}\}$ be the subspace orthogonal to the ``nearly hit" variable and discrepancy constraints.\;
  $\cdot$ Set $X_t \gets X_{t-1} + \gamma U_t$, where $U_t \sim \mathcal{N}(\mathcal{V}_t)$
 }
 \caption{The Brownian Motion process for Theorem \ref{Coloring}}
\end{algorithm}

When we say that $U \sim \mathcal{N}(\mathcal{V}_t)$, we are referring to the standard multi-dimensional Gaussian distribution: $U = U_1v_1 + ... + U_dv_d$ where $\{v_1,..,v_d\}$ is an orthonormal basis for $\mathcal{V}_t$ and $U_1, ...,U_d \sim \mathcal{N}(0,1)$ are all independent. 

\newpage

\begin{center}
    4.2. Analysis Outline.
\end{center}

We seek to prove the following:
\begin{lemma}\label{coloring-analysis}
We have that Theorem \ref{Coloring} holds for $X_T$ in the above algorithm, and that with probability at least $0.1$, $X_0,...,X_T \in \mathcal{P}$.
\end{lemma}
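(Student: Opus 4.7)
The plan is to identify three potentially bad events and show, via a union bound, that their joint complement holds with probability at least $0.1$. Let $N_1 = |C_T^{var}|$ and $N_2 = |C_T^{\disc}|$ denote the final numbers of saturated variable and discrepancy constraints. The three bad events to rule out are \textbf{(E1)} some iterate $X_t$ exits $\mathcal{P}$, \textbf{(E2)} $N_2$ exceeds a fixed small fraction of $n$, and \textbf{(E3)} $N_1 < n/2$.

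For \textbf{(E1)}, I would argue constraint-by-constraint. A constraint lying in $C_t^{var} \cup C_t^{\disc}$ is projected out by the construction of $\mathcal{V}_t$, so $X_t$ does not move in its normal direction and cannot be violated. An inactive constraint is at distance strictly greater than $\delta$ from the current iterate, so violating it at step $t$ requires the projection of the Gaussian increment $\gamma U_t$ onto its unit normal to have magnitude more than $\delta$. This projection is a centered Gaussian with variance at most $\gamma^2$, so the violation probability is at most $2\exp(-\delta^2/(2\gamma^2))$. By the choice $\delta = \Omega(\gamma\sqrt{\log(nm/\gamma)})$ with a sufficiently large hidden constant, a union bound over the $T(n+m)$ constraint-step pairs yields $P(E_1) \leq 0.01$.

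For \textbf{(E2)}, I would show $E[N_2]$ is small via a subgaussian tail bound on the martingale $Z^{(j)}_t := \langle X_t - x_0, v_j\rangle$. Each increment $Z^{(j)}_t - Z^{(j)}_{t-1} = \gamma\langle U_t, v_j\rangle$ is, conditionally on $\mathcal{F}_{t-1}$, a centered Gaussian with variance at most $\gamma^2 \|v_j\|_2^2 = \gamma^2$, so $Z^{(j)}_T$ is subgaussian with variance parameter at most $\gamma^2 T = K_1 = 16/3$. Therefore
\[ P\bigl(j \in C_T^{\disc}\bigr) \leq P\bigl(|Z^{(j)}_T| \geq c_j - \delta\bigr) \leq 2\exp\bigl(-(c_j - \delta)^2/(2K_1)\bigr), \]
and summing over $j$, using $\delta < 0.1$ and the hypothesis $\sum_j \exp(-c_j^2/16) \leq n/16$, yields $E[N_2] = O(n)$ with a small enough constant that Markov keeps $P(E_2)$ bounded well away from $1$.

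For \textbf{(E3)}, I would use an energy-growth argument. Because $U_t$ is a centered Gaussian in $\mathcal{V}_t$, conditioning on $\mathcal{F}_{t-1}$ gives $E[\|X_t - x_0\|^2 \mid \mathcal{F}_{t-1}] = \|X_{t-1} - x_0\|^2 + \gamma^2 \dim(\mathcal{V}_t)$, which iterates to
\[ E\bigl[\|X_T - x_0\|^2\bigr] = \gamma^2 \sum_{t=1}^T E[\dim(\mathcal{V}_t)]. \]
The key observation is that $|C_t^{var}|$ and $|C_t^{\disc}|$ are both monotonically non-decreasing in $t$: once $i \in C_t^{var}$ the algorithm freezes coordinate $i$, so $|(X_t)_i| = |(X_{t-1})_i| \geq 1-\delta$, and similarly for discrepancy indices. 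Hence $\dim(\mathcal{V}_t) \geq n - N_1 - N_2$ for all $t \leq T$, and
\[ E\bigl[\|X_T - x_0\|^2\bigr] \geq K_1\bigl(n - E[N_1] - E[N_2]\bigr). \]
Combined with a deterministic upper bound on $\|X_T - x_0\|^2$ coming from $X_T, x_0 \in [-1,1]^n$ and the bound on $E[N_2]$ from \textbf{(E2)}, this forces $E[N_1]$ to exceed $n/2$ by a constant gap, after which reverse Markov applied to $n - N_1$ lower-bounds $P(N_1 \geq n/2)$ by a positive constant. The main obstacle will be pinning down the constants in this last step: the trivial bound $\|X_T - x_0\|^2 \leq 4n$ is borderline, so one may need to exploit that each unsaturated coordinate satisfies $|(X_T)_i - (x_0)_i| \leq 2-\delta$ and that in the algorithm's main use $x_0$ is near $0$, yielding the sharper bound $\|X_T - x_0\|^2 \lesssim n$, which combined with $K_1 = 16/3$ gives $E[N_1] \geq 3n/4$ and makes the final union bound comfortably above $0.1$.
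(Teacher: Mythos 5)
Your overall architecture mirrors the paper's: a Gaussian tail bound to keep the iterates in $\mathcal{P}$ (your E1 is the paper's Claim~\ref{Polytope}), a bound on $\mathbb{E}[|C_T^{\disc}|]$ from the threshold hypothesis $\sum_j \exp(-c_j^2/16) \le n/16$ (your E2), and an energy argument combined with reverse Markov for $|C_T^{var}|$ (your E3). That much is sound and is essentially the same decomposition.

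The genuine gap is in the upper bound you need for the energy identity. You correctly derive $\mathbb{E}\bigl[\|X_T - x_0\|^2\bigr] = \gamma^2 \sum_{t=1}^T \mathbb{E}[\dim(\mathcal{V}_t)]$, but then you need to cap the left side by roughly $n$, and you acknowledge the difficulty: the deterministic bound $\|X_T - x_0\|^2 \le 4n$ gives $\mathbb{E}[N_1] + \mathbb{E}[N_2] \ge n(1 - 4/K_1) = n/4$, which is useless once you subtract $\mathbb{E}[N_2] \lesssim n/4$. Your proposed fix --- that ``in the algorithm's main use $x_0$ is near $0$'' --- is not available: Theorem~\ref{Coloring} must hold for arbitrary $x_0 \in [-1,1]^n$, and it is used recursively with partial colorings as $x_0$, so $x_0$ is generically far from $0$ after the first round. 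The paper sidesteps this entirely by \emph{not} centering at $x_0$: it bounds $\mathbb{E}[\|X_T\|^2] \le n$ directly (Claim~\ref{bound-x-t}), via a coordinate-by-coordinate conditioning on the first time each index entered $C^{var}$ (using that it was below $1-\delta$ just before entry, so $\mathbb{E}[(X_T)_i^2] \le (1-\delta)^2 + \gamma^2 \le 1$). Since the same telescoping gives $\mathbb{E}[\|X_T\|^2] = \|x_0\|^2 + \gamma^2\sum_t \mathbb{E}[\dim(\mathcal{V}_t)] \ge \gamma^2\sum_t \mathbb{E}[\dim(\mathcal{V}_t)]$, the nonnegative $\|x_0\|^2$ term can simply be dropped, and the $x_0$-dependence disappears. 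This is the step your outline is missing and it is not cosmetic: without it the constants do not close.

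Two smaller points. First, your claimed $\mathbb{E}[N_1] \ge 3n/4$ is an arithmetic slip: with $K_1 = 16/3$, the bound $n \ge K_1(n - \mathbb{E}[N_1] - \mathbb{E}[N_2])$ gives $\mathbb{E}[N_1] \ge n(1 - 3/16) - \mathbb{E}[N_2]$, and with $\mathbb{E}[N_2] < n/4$ this is $\approx 0.56n$ (the paper's figure), not $3n/4$; you would need $\mathbb{E}[N_2] \lesssim n/16$ for $3n/4$, which is not what the hypothesis delivers. Second, your E2 bound $2\exp(-(c_j-\delta)^2/(2K_1))$ degenerates when $c_j$ is comparable to $\delta$; the paper handles this by separating out $J = \{j : c_j \le 10\delta\}$, bounding $|J|$ directly from the hypothesis, and only applying the martingale tail to $j \notin J$ where $c_j - \delta \ge 0.9 c_j$. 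You will need some version of that split to make the constants work.
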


We begin with a useful claim regarding the behavior of the random walk.

\begin{claim}\label{Ortho}
For all $t$ we have that $C_t^{var} \subseteq C_{t+1}^{var}$ and similarly $C_t^{\disc} \subseteq C_{t+1}^{\disc}$ for $t=0,...,T-1$. This further implies that $\dim(\mathcal{V}_t) \geq \dim(\mathcal{V}_{t+1})$.
\end{claim}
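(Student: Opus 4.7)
The plan is to exploit the fact that by construction the increment $\gamma U_t = X_t - X_{t-1}$ lies in the subspace $\mathcal{V}_t$, so that the step is orthogonal (in the appropriate sense) to every constraint that was already ``nearly hit'' at time $t$. This will immediately give both set containments, and the dimension inequality will then be a formal consequence.

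First I would unpack the definition of $\mathcal{V}_t$: any $u \in \mathcal{V}_t$ satisfies $u_i = 0$ for every $i \in C_t^{var}$ and $\langle u, v_j\rangle = 0$ for every $j \in C_t^{\disc}$. Since $U_t \sim \mathcal{N}(\mathcal{V}_t)$ is supported on $\mathcal{V}_t$, these equalities hold for $U_t$ with probability $1$, and hence for $\gamma U_t$ as well.

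Next I would verify the two containments directly. For $i \in C_t^{var}$, the $i$-th coordinate of the increment is zero, so
$$(X_t)_i = (X_{t-1})_i + \gamma (U_t)_i = (X_{t-1})_i,$$
and therefore $|(X_t)_i| = |(X_{t-1})_i| \geq 1 - \delta$, placing $i$ in $C_{t+1}^{var}$. For $j \in C_t^{\disc}$, the analogous computation gives
$$\langle X_t - x_0, v_j\rangle = \langle X_{t-1} - x_0, v_j\rangle + \gamma \langle U_t, v_j\rangle = \langle X_{t-1} - x_0, v_j\rangle,$$
whose absolute value is still at least $c_j - \delta$, so $j \in C_{t+1}^{\disc}$.

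Finally, since $\mathcal{V}_{t+1}$ is cut out by the linear constraints indexed by $C_{t+1}^{var} \cup C_{t+1}^{\disc}$, a superset of those defining $\mathcal{V}_t$, we get $\mathcal{V}_{t+1} \subseteq \mathcal{V}_t$, and therefore $\dim(\mathcal{V}_t) \geq \dim(\mathcal{V}_{t+1})$. There is no real obstacle here: the claim is essentially the observation that the algorithm is engineered to ``freeze'' any constraint that becomes nearly tight, which is exactly what restricting the Brownian increment to $\mathcal{V}_t$ enforces.
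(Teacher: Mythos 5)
Your proof is correct and takes essentially the same approach as the paper: both observe that the increment $\gamma U_t$ lies in $\mathcal{V}_t$, hence is annihilated by every nearly-hit constraint, so those constraint values are unchanged from $X_{t-1}$ to $X_t$. You also spell out the discrepancy case and the nesting $\mathcal{V}_{t+1}\subseteq\mathcal{V}_t$ explicitly, which the paper leaves as ``very similar'' and implicit, respectively.
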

\begin{proof}
Intuitively, we are taking Gaussian steps orthogonal to the subspace $C_t$, so at each step we should never be able to remove any elements in $C_t^{var}$ or $C_t^{\disc}$. Formally, let $i \in C_t^{var}$. Then $U_t \in \mathcal{V}_t$ which implies $(U_t)_i = 0$. This implies that $(X_t)_i = (X_{t-1})_i$ and $i \in C_{t+1}^{var}$ as desired. The argument is very similar for the discrepancy constraints.
\end{proof}

Now, we can begin to look at the results of the algorithm. 

First, we can prove that with good probability, our Brownian motion will not leave the polytope. The ``nearly hit" constraints serve this purpose; we select step size $\gamma$ small enough that whenever a solution approaches a constraint, it is more likely to fall into the $\delta$-band of the constraint than it is to break the polytope. Once it falls into this band, Claim \ref{Ortho} implies that it will never break the polytope. This can be shown formally using Gaussian tailbounds.

Next, we argue that the algorithm satisfies many variable constraints and few discrepancy constraints with high probability. Using our bound on the discrepancy coefficients as well as Gaussian tailbounds, we can demonstrate that the easily-satisfiable discrepancy constraints is small, and that it is unlikely that many other ones are met. With this in mind, at any time $t$ note that there are two scenarios for $C_t^{var}$: if it is large then we are done, and if it is small then our Brownian motion is less constrained so we expect to take steps of larger magnitude. Thus, we argue that by time $T$ it is likely that we ``nearly-hit" many variable constraints.

Finally, we look at the computational complexity of the algorithm, which is claimed to be $O((n+m)^3\delta^{-2}\log(nm/\delta))$. The paper does not provide a full justification of this runtime, but we believe it to be inaccurate. 

Computing $C_t^{var}$ and $C_t^{\disc}$ given $X_{t-1}$ takes $O(nm)$ time, since computing $C_t^{\disc}$ requires the computation of $m$ dot products in $\mathbb{R}^n$. We can sample from $\mathcal{N}(\mathcal{V}_t)$ by constructing an orthonormal basis for $\mathcal{V}_t$. We do this by constructing an orthogonal basis using our constraints, and using the completion theorem to find a basis of $\mathcal{V}_t$. Finding a basis from $n+m$ constraint vectors requires Gaussian elimination, so it takes $O((n+m)^3)$ time. 

Now, we have to repeat this for $T$ rounds, so the runtime should be expressible as $O((n+m)^3 T)$. Note that $T = O(1/\gamma^2)$, so the runtime described by \cite{Lovett} holds in the case where 
\begin{align*}
    \frac{1}{\gamma^2} &= O(\delta^{-2}\log(nm/\delta)) \\
    \frac{1}{\gamma} &= O\left(\frac{1}{\delta}\sqrt{\log(nm/\delta)}\right) \\
    \delta &= O(\gamma\sqrt{\log(nm/\delta)}).
\end{align*}
However, in the paper, $\gamma$ is selected under the condition $\delta = O(\gamma\sqrt{nm/\gamma})$. Of course, this ends up being a small distinction for $nm \gg \delta$, but it is still worth noting.

A full proof is provided in the appendix at section \ref{full-proof}. 


\section{The Discrepancy Minimizer}
For the purposes of brevity, we only provide a proof for Theorem \ref{result}.

\begin{proof}[Proof (Theorem \ref{result})] To find our full coloring, we will simply repeatedly use Theorem \ref{Coloring}. For $m$ sets on a universe of size $n$, we'll select $\delta = 1/(8 \log m)$ and $c_1, ..., c_m = 8\sqrt{\log(m/n)}$, and denote by $v_i ... v_m$ the indicator vectors for the sets. We'll use the partial coloring algorithm starting with vector $\vec{x}_0 = 0^n$ to find some vector $\vec{x}_1$ where $|\langle v_j, x_1 \rangle| < \sqrt{n}(8\sqrt{\log(m/n)})$ for all $j$ and where more than half of the points have values within the ``nearly-hit" bound. By Theorem \ref{Coloring}, this has probability of at least $0.1$, which we can boost by repeating as needed. 

Applying this iteratively to the vectors that haven't yet been assigned a partial coloring, we find that within $t = O(\log n)$ iterations every value in $x$ will be within $\delta$ of an assignment. When this occurs, for any $j \in [m]$, we note that $n_i < \frac{n}{2^i}$, so we have 
\begin{align*}
    |\langle v_j, x \rangle| &< \sum_{i=0}^t |\langle v_j, x_t \rangle| \\
    &< \sum \sqrt{n_i}8\sqrt{\log(m/n_i)} \\
    &< 8\sqrt{n} \sum_{i=1}^{\infty} \sqrt{\frac{i + \log(m/n)}{2^i}} \\
    &< C\sqrt{n\log(m/n)}
\end{align*}
for some constant $C$.

We then use this candidate solution and round it to an actual coloring. Knowing that each variable is within $\delta$ of either $1$ or $-1$, we'll set each variable to the one it is closer to with probability $(1+|x_i|)/2$, which means that $\mathbb{E}[\chi_i] = x_i$. Denoting $Y := \chi - x$ we have that the discrepancy for any set $j$ follows  
\[ |\langle \chi, v_j \rangle| \leq |\langle x, v_j \rangle| + |\langle Y, v_j \rangle|\]
due to triangle inequality. What's left, then, is to find an upper bound for $|\langle Y, v_j \rangle|$. Noting that $|Y_i| \leq 2$, $\mathbb{E}[Y_i]=0$, $\sigma^2(Y_i) \leq \delta$ (which the paper claims but we are only able to show this is true for $2\delta$), and $||v_j||_2 \leq \sqrt{n}$, the fact that $||v_j||_{\infty} \leq 1$ allows us to use a Chernoff bound and get 
\begin{align*}
    \Pr[|\langle Y, v_j \rangle| > 2 \sqrt{2\log m} \sqrt{n\delta}] &\leq 2\exp(-2\log m) \\
    &\leq 2/m^2 \\
    &\leq 1/2m
\end{align*}
for $m > 2$. Note that $\delta = 1/(8 \log m)$, so $2 \sqrt{2\log m} \sqrt{n\delta} = \sqrt{n}$, which means that across all $j$ we have $\Pr[|\langle Y, v_j \rangle| > \sqrt{n}] < 1/2$. Therefore, with probability at least $1/2$, $\disc(\chi) \leq C\sqrt{n\log(m/n)} + \sqrt{n} < K\sqrt{n\log(m/n)}$, as desired.
\end{proof}

\newpage

\section*{References}
\printbibliography[heading = none]

\newpage

\section{Appendix}

\subsection{A proof of the Beck-Fiala Theorem}\label{Beck-Fiala}
We present a proof of the Beck-Fiala theorem using only arguments from linear algebra.

\begin{proof} \cite{Chazelle}
We start by initializing all $\chi(i) = 0, \forall i \in [n]$, and we call all of these variables \textit{undecided}. We also call a set \textit{stable} if it has less than or equal to $t$ undecided elements. We also note that due to the constraint, there must be less $n$ sets that contain strictly more than $t$ elements to start off with (all of which are undecided upon initialization). If we impose the constraints that all of the elements in each unstable set must be zero, we get a system of less than $n$ equations, and $n$ variables. This tells us that there is at least one nontrivial solution to the system of equations, that changes only undecided variables and maintains that the discrepancy of all unstable sets remains zero. We can normalize this solution until at least one of the undecided variables is $\pm 1$. Then, this variable is decided, and we have a partial coloring. We now have at most $n-1$ undecided variables, and each undecided variable is in $(-1, 1)$. By the same argument from above, we have that the number of unstable sets is strictly less than the number of undecided variables, so we can repeat the procedure to find another nontrivial solution to our system of equations. We continue in the manner until all the sets are stable. Then we note that until a set is declared stable, its discrepancy is 0. Then, when it is declared stable, it has at most $t$ undecided variables, all of which are in $(-1,1)$. Then the process of deciding those variables changes the discrepancy of the set by strictly less than $2t$. And since the final discrepancy must be integral, we get the result.  
\end{proof}

\subsection{A Full Proof of Lemma \ref{coloring-analysis}} \label{full-proof}
We have already argued about the runtime of the algorithm. Here, we must show that the solution is unlikely to leave the polytope, that few discrepancy constraints are met, and that many variable constraints are met.

\begin{claim}\label{Polytope}
For $\gamma \leq \delta / \sqrt{c\log (mn/\gamma)}$ and $c$ a sufficiently large constant, with probability at least $1-1/(mn)^{c-2}$ we have that $X_0,...,X_T \in \mathcal{P}$
\end{claim}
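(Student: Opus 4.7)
The plan is to pinpoint the only way the random walk can exit $\mathcal{P}$ at step $t$, bound that event's probability by a Gaussian tail estimate, and then union bound over all $T$ steps and all $n+m$ constraints.

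First I would apply Claim \ref{Ortho} to reduce ``exit at step $t$'' to a single-step ``jump'' over a slack of at least $\delta$. Indeed, if index $i$ lies in $C_t^{var}$, then $(U_t)_i = 0$ and the $i$th coordinate of the walk is frozen from that point on; likewise if $j \in C_t^{\disc}$ then $\langle U_t, v_j\rangle = 0$ and the inner product $\langle X_t - x_0, v_j\rangle$ is frozen. So a polytope violation at step $t$ is precisely an event of the form $|\gamma(U_t)_i| > \delta$ for some $i \notin C_t^{var}$, or $|\gamma\langle U_t, v_j\rangle| > \delta$ for some $j \notin C_t^{\disc}$, because in those cases the prior slack was at least $\delta$ by definition of the ``nearly hit'' sets.

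Next I would estimate the probability of such a jump using a standard Gaussian tail bound. Since $U_t$ is a standard Gaussian on the subspace $\mathcal{V}_t$, the one-dimensional random variables $(U_t)_i = \langle U_t, e_i \rangle$ and $\langle U_t, v_j \rangle$ are centered Gaussians whose variances equal the squared norms of the projections of $e_i$ and $v_j$ onto $\mathcal{V}_t$. These are bounded by $\|e_i\|_2^2 = 1$ and $\|v_j\|_2^2 = 1$ respectively (using the normalization of the $v_j$). Hence each bad event has probability at most $2\exp(-\delta^2/(2\gamma^2))$, and the hypothesis $\gamma \leq \delta / \sqrt{c\log(mn/\gamma)}$ reduces this to at most $2(mn/\gamma)^{-c/2}$.

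Finally I would union bound over all $T = O(1/\gamma^2)$ iterations and all $n+m$ constraints, giving an overall failure probability of $O((n+m)\gamma^{-2}(mn/\gamma)^{-c/2})$. For $c$ sufficiently large this is comfortably below $1/(mn)^{c-2}$, which yields the stated bound. The main point of care --- and the only real subtlety --- is justifying that the variance of an arbitrary linear functional of $U_t$ is at most the squared norm of the defining vector, even though $\mathcal{V}_t$ itself depends on the entire prior trajectory; but this is immediate from the definition of the standard Gaussian on a subspace, so the rest is routine tail bounding and counting.
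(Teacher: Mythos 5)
Your proposal follows essentially the same route as the paper's proof: decompose into first-exit events, observe via Claim \ref{Ortho} that an exit at step $t$ forces a one-step jump of magnitude at least $\delta$ across some variable or discrepancy constraint, bound the probability of such a jump using the fact that $\langle U_t, w\rangle$ is a centered Gaussian with variance at most $\|w\|_2^2 = 1$ (the content of Claim \ref{subspace}), and finally union bound over $T(n+m)$ events. The exposition is slightly more streamlined in that you unify variable and discrepancy constraints into a single family from the start (the paper introduces $W = \{e_1,\dots,e_n,v_1,\dots,v_m\}$ after treating the variable case separately), and you carry the same slight looseness in tracking the exact relationship between the constant $c$ in the hypothesis and the exponent $c-2$ in the conclusion that the paper itself does, but the substance is identical.
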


To prove the above claim, we will need to use a Gaussian tail bound:
\begin{claim} \label{tailbound}
For any $\lambda >0$, $P(|G| \geq \lambda ) \leq 2\exp(-\lambda^2/2)$, where $G \sim \mathcal{N}(0,1)$
\end{claim}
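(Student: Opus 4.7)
The plan is to reduce the two-sided bound to a one-sided bound by symmetry and then apply a Chernoff-style argument using the moment generating function of the standard Gaussian. First I would observe that $G$ and $-G$ have the same distribution, so $P(|G| \geq \lambda) = 2 P(G \geq \lambda)$, which reduces the problem to showing $P(G \geq \lambda) \leq \exp(-\lambda^2/2)$.

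For the one-sided bound I would use the standard exponential moments trick. For any $t > 0$, Markov's inequality applied to $e^{tG}$ yields
\begin{equation*}
P(G \geq \lambda) = P(e^{tG} \geq e^{t\lambda}) \leq e^{-t\lambda} \, \mathbb{E}[e^{tG}].
\end{equation*}
The moment generating function of a standard Gaussian is $\mathbb{E}[e^{tG}] = e^{t^2/2}$, which is a direct Gaussian integral computation via completing the square. Substituting gives $P(G \geq \lambda) \leq \exp(t^2/2 - t\lambda)$, and then I would optimize over $t$ by setting $t = \lambda$, which yields $P(G \geq \lambda) \leq \exp(-\lambda^2/2)$. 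Doubling to account for the two tails gives the claim.

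There is really no main obstacle here; the only nontrivial ingredient is the MGF computation $\mathbb{E}[e^{tG}] = e^{t^2/2}$, which I would either quote as a standard fact or verify by writing
\begin{equation*}
\mathbb{E}[e^{tG}] = \frac{1}{\sqrt{2\pi}} \int_{-\infty}^{\infty} e^{tx - x^2/2} \, dx = \frac{e^{t^2/2}}{\sqrt{2\pi}} \int_{-\infty}^{\infty} e^{-(x-t)^2/2} \, dx = e^{t^2/2}
\end{equation*}
after completing the square in the exponent. An alternative approach, if one wants to avoid the MGF entirely, is to bound the tail integral directly using the inequality $1 \leq x/\lambda$ for $x \geq \lambda$, which gives $\int_\lambda^\infty e^{-x^2/2}\,dx \leq \frac{1}{\lambda}e^{-\lambda^2/2}$; but the Chernoff route gives a cleaner constant and is what I would write up.
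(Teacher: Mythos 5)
Your proof is correct, but it takes a different route from the paper. You use the Chernoff/MGF argument: apply Markov to $e^{tG}$, plug in $\mathbb{E}[e^{tG}] = e^{t^2/2}$, and optimize at $t=\lambda$ to get $P(G\geq\lambda) \leq e^{-\lambda^2/2}$ uniformly in $\lambda > 0$. The paper instead bounds the tail integral directly by inserting $1 \leq t/\lambda$ into the integrand (the "alternative" you mention at the end of your proposal), obtaining $P(|G|>\lambda) \leq \frac{2e^{-\lambda^2/2}}{\sqrt{2\pi}\lambda}$, and then must split into cases: for $\lambda \geq 1/\sqrt{2\pi}$ the prefactor $\frac{1}{\sqrt{2\pi}\lambda}$ is at most $1$, while for $\lambda < 1/\sqrt{2\pi}$ the claimed bound $2e^{-\lambda^2/2}$ exceeds $1$ and so holds trivially. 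The trade-off is clear: your MGF route avoids the small-$\lambda$ case analysis entirely and generalizes cleanly to sums of Gaussians (indeed, it is exactly the technique needed for the paper's later Lemma 12 on martingale-type sums $Y_1+\cdots+Y_T$), at the cost of quoting or deriving the MGF; the paper's elementary integral bound requires no moment computation but actually proves the sharper estimate $\frac{2e^{-\lambda^2/2}}{\sqrt{2\pi}\lambda}$ before deliberately weakening it, which necessitates the extra case distinction to recover the stated form.
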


\begin{proof}
We have that 
$$P(|G|>\lambda) = 2P(G > \lambda) = 2 \int_{\lambda}^{\infty} \frac{1}{\sqrt{2\pi}}\exp(-t^2/2)dt \leq 2 \int_{\lambda}^{\infty} \frac{t}{\lambda}\frac{1}{\sqrt{2\pi}}\exp(-t^2/2)dt  = \frac{2\exp(-\lambda^2/2)}{\sqrt{2\pi}\lambda}$$
From here, it is easy to see that for $\lambda \geq 1/\sqrt{2\pi}$ we have that $\frac{2\exp(-\lambda^2/2)}{\sqrt{2\pi}\lambda} \leq 2\exp(-\lambda^2/2) $ as desired. For the case when $\lambda \leq 1/\sqrt{2\pi}$, it is easy to see that $2\exp(-\lambda^2/2) > 1$ so the bound is trivial.
\end{proof}

\begin{proof}[Proof of Claim \ref{Polytope}]
Clearly $X_0 = x_0 \in \mathcal{P}$. We further define $E_t := \{X_t \not\in \mathcal{P} | X_0,...,X_{t-1} \in \mathcal{P}\}$ denote the event that $X_t$ is the first element of the sequence not in $\mathcal{P}$. \ We then have
$$Pr(X_0,...,X_T \in \mathcal{P}) = 1-\sum_{t=1}^T Pr(E_t).$$

The next step of the proof is clearly to calculate $Pr(E_t)$. In order for $E_t$ to happen, it must be the case that either a variable constraint or a discrepancy constraint was violated. Lets first look at the variable constraint case: say $(X_t)_i>1$. Since $X_{t-1} \in \mathcal{P}$, we must have that $(X_{t-1})_i \leq 1$. Yet, if $(X_{t-1})_i \geq 1-\delta,$ then $i \in C_t^{var}$ so $(X_{t-1})_i = (X_{t})_i$. Thus, for the constraint to be violated, we must have had that $(X_{t-1})_i < 1-\delta $. Then, in order for $(X_t)_i$ to be greater than 1, and since $X_t = X_{t-1} + \gamma U_t$, we must have that $|U(t)_i| \geq \delta/\gamma$.

Now, let's look what must happen in order for $X_t$ to violate a discrepancy constraint. First we define $W := \{e_1, ..., e_n, v_1,...,v_m\}$. By our construction of $W$, we conclude that if $E_t$ holds then we must have that $|\langle X_t - X_{t-1} ,w\rangle| \geq \delta$ for some $w \in W$. This is equivalent to saying that $|\langle U_t, w \rangle| \geq \delta/\gamma$ for that same $w$. We note here that, once again by construction of $W$, we have that if  $|U(t)_i| \geq \delta/\gamma$ holds, then we must have that $|\langle U_t, w \rangle| \geq \delta/\gamma$ holds for some $w$, in particular it holds if we pick $w=e_i$. However, the reverse does not hold. Therefore, it since the event of a variable constraint being violated is entirely contained in the event of a discrepancy constraint being violated, it suffices to bound $Pr[|\langle U_t, w \rangle| \geq \delta/\gamma]$. In order to bound this, we need the following claim:

\begin{claim} \label{subspace}
Let $V \subseteq \mathbb{R}^n$ be a subspace and let $G \sim \mathcal{N}(V)$. Then for all $u \in \mathbb{R}^n$, $\langle G, u \rangle \sim \mathcal{N}(0, \sigma^2)$, where $\sigma^2 \leq \|u\|_2^2$
\end{claim}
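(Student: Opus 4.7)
The plan is to unfold the definition of $G \sim \mathcal{N}(V)$, realize the inner product $\langle G, u \rangle$ as a linear combination of independent standard normals, and then identify its variance as the squared norm of the orthogonal projection of $u$ onto $V$.

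First I would fix an orthonormal basis $\{v_1, \ldots, v_d\}$ of $V$ and write $G = \sum_{i=1}^{d} G_i v_i$ with $G_1, \ldots, G_d \sim \mathcal{N}(0,1)$ independent, as prescribed by the definition of $\mathcal{N}(V)$ given after the algorithm. Taking the inner product with $u$ and using linearity gives
\[ \langle G, u \rangle = \sum_{i=1}^{d} G_i \langle v_i, u \rangle. \]
Since each $\langle v_i, u \rangle$ is a deterministic scalar and the $G_i$ are independent centered Gaussians, $\langle G, u \rangle$ is a mean-zero Gaussian with variance $\sigma^2 = \sum_{i=1}^{d} \langle v_i, u \rangle^2$.

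The remaining step is to bound $\sigma^2$ by $\|u\|_2^2$. I would observe that if $P_V$ denotes the orthogonal projection onto $V$, then $P_V u = \sum_{i=1}^{d} \langle v_i, u\rangle v_i$, and by Parseval's identity applied inside $V$,
\[ \sigma^2 = \sum_{i=1}^{d} \langle v_i, u \rangle^2 = \|P_V u\|_2^2. \]
The Pythagorean decomposition $u = P_V u + (u - P_V u)$ with $u - P_V u \perp P_V u$ then yields $\|P_V u\|_2^2 \leq \|u\|_2^2$, giving $\sigma^2 \leq \|u\|_2^2$.

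There is essentially no obstacle here: the claim is a basic fact about Gaussian measures on subspaces, and the only thing one has to be a bit careful about is verifying that the definition $\sigma^2 = \sum_i \langle v_i, u\rangle^2$ is independent of the choice of orthonormal basis of $V$ — which is immediate from the identification $\sigma^2 = \|P_V u\|_2^2$, since $P_V$ depends only on $V$ and not on the basis.
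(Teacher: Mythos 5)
Your proposal is correct and follows the same route as the paper: expand $G$ in an orthonormal basis of $V$, identify $\langle G, u\rangle$ as a centered Gaussian with variance $\sum_i \langle v_i, u\rangle^2 = \|\proj_V u\|_2^2$, and bound this by $\|u\|_2^2$. You just spell out the Pythagorean justification for the last inequality a bit more explicitly than the paper does.
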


\begin{proof}
We have that $G = G_1v_1 + ... + G_dv_d$, where $\{v_1,...,v_d\}$ is an arbitrary orthonormal basis for $V$ and $G_1, ..., G_d$ are all standard normals and independent. Then $\langle G, u \rangle = \sum_{i=1}^d \langle v_i, u \rangle G_i$. This is a Gaussian RV which has mean zero, and variance $\sum_{i=1}^d \langle v_i, u \rangle^2$. This equation is simply equal to $\|\proj_V u\|_2^2$, the norm squared of the projection of $u$ onto the span of $V$. Therefore, we have that $\sum_{i=1}^d \langle v_i, u \rangle^2 \leq \|u\|_2^2$, and we are done.
\end{proof}
Now we can use the above claim, and we have that $\langle U_t, w\rangle$ is Gaussian with mean 0 and variance at most 1. Then, by claim \ref{tailbound}, we have that:
$$Pr[|\langle U_t, w \rangle| \geq \delta/\gamma] \leq 2\exp(-(\delta/\gamma)^2/2) $$
Now, by our choices of variables, we have that $\delta/\gamma = \sqrt{C\log (nm/\gamma})$ and $T = O(1/\gamma^2)$. Therefore, we have
$$Pr[X_0 \cup ... \cup X_t \not\in \mathcal{P}] = \sum_{t-1}^T Pr[E_t]$$
which, by a union bound, 
$$\leq \sum_{t-1}^T \sum_{w \in W} Pr[|\langle U_t, w \rangle| \geq \delta/\gamma] \leq T (n+m)\cdot 2\exp(-(\sqrt{C\log (nm/\gamma}))^2/2) = T(n+m)\cdot 2\left (\frac{\gamma}{nm}\right )^C $$
$$\leq T(nm)\frac{\gamma^2}{(nm)^C} \leq \frac{1}{(nm)^{C-2}}$$
For large enough C, since we have that $\gamma <1$ and $(nm) >1$.
\end{proof}
We are now well on our way to proving Lemma \ref{coloring-analysis}. The intuition behind the remaining steps is as follows. We will use the constraint on our discrepancy thresholds $c_j, j \in [m]$ to argue first that $\mathbb{E}[|C_T^{var}|] \ll n$. This will be useful because it will mean that $\dim (\mathcal{V}_{t-1})$ will be larger, which means that $\mathbb{E}[\|X_t\|^2]$ should increase more appreciably compared to the previous timestep. At any given timestep ,either $|C_t^{var}|$ is large and we are done, or $|C_t^{var}|$ is small and once again $\dim (\mathcal{V}_{t-1})$ and we will be taking bigger steps. We note also that in order to prove the lemma, we really only need to show that $\mathbb{E}[|C_t^{var}|] = \Omega(n)$, since if we achieve this, then we can use this along with the fact that $|C_T^{var}|$ is upper bounded by $n$ to show that $Pr[|C_t^{var}| < n/2] < 0.9$.\\
We first show that $\mathbb{E}[|C_t^{\disc}|]$ is small; that is, on average very few discrepancy constraints are ever nearly hit.

\begin{claim}
$\mathbb{E}[|C_t^{\disc}|] < n/4$
\end{claim}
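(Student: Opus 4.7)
The plan is to apply linearity of expectation, $\mathbb{E}[|C_t^{\disc}|] = \sum_{j=1}^m \Pr[j \in C_t^{\disc}]$, bound each term using the sub-Gaussian behavior of the random walk, and finally invoke the hypothesis $\sum_j \exp(-c_j^2/16) \leq n/16$.

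The key observation is that $\langle X_{t-1} - x_0, v_j \rangle = \gamma \sum_{s=1}^{t-1} \langle U_s, v_j \rangle$ is a martingale: conditioned on $X_0,\ldots,X_{s-1}$, Claim \ref{subspace} tells us that $\langle U_s, v_j \rangle$ is Gaussian with mean zero and variance at most $\|v_j\|_2^2 = 1$. Iterating the conditional Gaussian moment-generating bound $\mathbb{E}[\exp(\lambda \gamma \langle U_s, v_j \rangle) \mid X_0,\ldots,X_{s-1}] \leq \exp(\lambda^2 \gamma^2 / 2)$ over $s = 1, \ldots, t-1$ gives $\mathbb{E}[\exp(\lambda \langle X_{t-1} - x_0, v_j \rangle)] \leq \exp(\lambda^2 K_1/2)$, since $\gamma^2 T = K_1 = 16/3$. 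Hence $\langle X_{t-1} - x_0, v_j \rangle$ is sub-Gaussian with variance proxy at most $K_1$, and the standard Chernoff bound yields
\[
\Pr[j \in C_t^{\disc}] \;\leq\; 2 \exp\!\left(-\frac{3(c_j - \delta)^2}{32}\right).
\]

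To close, I would compare the sub-Gaussian exponent $3(c_j - \delta)^2/32$ against the hypothesis exponent $c_j^2/16 = 2c_j^2/32$. A direct calculation shows that for $c_j \geq c_0$ where $c_0$ is some absolute constant of order one (feasible since $\delta < 0.1$), we have $3(c_j - \delta)^2/32 \geq c_j^2/16$; for $c_j < c_0$ we use the trivial bound $\exp(-3(c_j - \delta)^2/32) \leq 1$. I would then split the sum and do budget accounting: if $k$ indices satisfy $c_j < c_0$, then $k \exp(-c_0^2/16) \leq \sum_{c_j < c_0} \exp(-c_j^2/16) \leq n/16$, so the tail piece contributes $\sum_{c_j \geq c_0} \exp(-c_j^2/16) \leq n/16 - k \exp(-c_0^2/16)$. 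Combining gives
\[
\mathbb{E}[|C_t^{\disc}|] \;\leq\; 2k + 2\!\left(\frac{n}{16} - k \exp(-c_0^2/16)\right) \;=\; \frac{n}{8} + 2k\bigl(1 - \exp(-c_0^2/16)\bigr),
\]
which for $c_0$ small and $k \leq n/16$ is comfortably below $n/4$.

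The main obstacle is precisely this final bookkeeping: the constants are tight because the two-sided Chernoff factor of $2$ and the $\delta$-shift eat much of the slack between $3/32$ and $2/32$, so one must budget across small and large $c_j$ together rather than bound the two sums independently. Everything leading up to the tail bound is a routine martingale/sub-Gaussian argument built directly on Claim \ref{subspace}.
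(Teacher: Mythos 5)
Correct, and essentially the same approach as the paper: both split the indices by the size of $c_j$, use a trivial bound on the small-$c_j$ bucket, apply a martingale/sub-Gaussian tail bound to the large-$c_j$ bucket, and cash in the hypothesis $\sum_j \exp(-c_j^2/16) \leq n/16$. The paper uses the split threshold $c_j \leq 10\delta$ (so that $c_j - \delta \geq 0.9\,c_j$ on the complement, which cleans up the exponent) and cites a martingale tail lemma of Bansal, whereas you use an absolute threshold $c_0$ of order one and re-derive the same tail bound by iterating the conditional MGF; these are cosmetic differences. Two minor inaccuracies in your bookkeeping, neither fatal: the trivial bound for $c_j < c_0$ should be $\Pr[j\in C_t^{\disc}] \leq 1$, not $\leq 2$ (you pay an unnecessary factor of $2$ there), and $k$ is bounded by $(n/16)e^{c_0^2/16}$ rather than by $n/16$; since $c_0$ is small both slacks are negligible and the final bound remains comfortably below $n/4$.
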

\begin{proof}
We let $J := \{j: c_j \leq 10\delta \}$. In order to bound the size of $J$, we have that from our constraints
$$n/16 \geq \sum_{j \in J} \exp (-c_j^2/16) \geq |J| \cdot \exp(-100\delta^2/16) \geq |J| \cdot \exp(-1/16) > 9|J|/10$$
Since $\delta <0.1$. So we have then that $|J| \leq 1.2n/16 < 2n/16$. Now we consider the $j \not \in J$. If $j \in C_T^{\disc}$, then $|\langle X_T-x_0, v_j\rangle| \geq c_j - \delta \geq 0.9c_j$. We want to bound the probability that his occur. Via our update formula, we have that $X_T = x_0 + \gamma(U_1+...+U_T)$.  We then define $Y_i = \langle U_i, v_j \rangle$. We then have that for $j \not\in J$
$$Pr[j \in C_T^{\disc}] = Pr[|Y_1 + ... + Y_T| \geq 0.9c_j/\gamma]$$
We will also need the following Lemma:
\begin{lemma}[\cite{Bansal}]
Let $X_1,...X_T$ be random variables, and let $Y_1,...,Y_T$ be RVs where each $Y_i$ is a function of $X_i$. Suppose that for all $1 \leq i \leq T$, $Y_i|X_1, ..., X_{i-1}  $ is Gaussian with mean zero and variance at most one. Then for any $\lambda >0$:
$$Pr[|Y_1 + ...+ Y_T| \geq \lambda \sqrt{T}] \leq 2 \exp(-\lambda^2/2)$$
\end{lemma}
The proof of the above lemma is a generalization of the proof for Claim \ref{tailbound}, and is omitted. We note that $Y_i = \langle U_i, v_j \rangle \sim \mathcal{N}(0, \sigma^2)$, where $\sigma^2 \leq \|v_j\|^2 = 1$ by our assumption at the beginning of the problem that we had normalized all the the $v_i$. We can apply the above lemma to our $Y_i$'s, since $Y_i$ is a function of $U_i$ and $Y_i|U_1,...,U_{i-1}$ is Gaussian with mean zero and variance at most one to get that 
$$Pr[j \in C_T^{\disc}] \leq 2\exp(-(0.9c_j)^2/2\gamma^2T$$
which, since $T = K_1/\gamma^2$
$$ = 2\exp(-(0.9c_j)^2/2K_1) < 2\exp(-c_j^2/2)$$
We therefore have that 
$$\mathbb{E}[|C_T^{\disc}|] \leq |J| + \sum_{j \not\in J} Pr[j \in C_T^{\disc}] < 2n/16+2n/16 = n/4$$
Where above we have used conditional expectations, and assumed in the worst case that every element in $J$ is in $C_T^{\disc}$ and we have also used the constraint $\sum_{j=1}^m\exp(-c_j^2/16) \leq n/16$.
\end{proof}

\begin{claim} \label{bound-x-t}
$\mathbb{E}[\|X_T\|_2^2] \leq n$
\end{claim}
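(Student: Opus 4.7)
The plan is to exploit the observation that the polytope $\mathcal{P}$ is contained in the cube $[-1,1]^n$ by its very definition, so whenever $X_T\in\mathcal{P}$ each coordinate satisfies $|(X_T)_i|\leq 1$ and thus $\|X_T\|_2^2=\sum_i (X_T)_i^2\leq n$ deterministically. By Claim~\ref{Polytope}, the trajectory stays inside $\mathcal{P}$ with probability at least $1-(mn)^{-(c-2)}$ for $c$ a sufficiently large constant, so on this overwhelmingly likely ``good'' event the desired inequality is immediate. The only remaining task is to verify that the tiny ``bad'' event contributes negligibly to the expectation.

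Concretely I would split
\begin{equation*}
\mathbb{E}[\|X_T\|_2^2] \;=\; \mathbb{E}[\|X_T\|_2^2\cdot\mathbf{1}_{X_T\in\mathcal{P}}] \;+\; \mathbb{E}[\|X_T\|_2^2\cdot\mathbf{1}_{X_T\notin\mathcal{P}}],
\end{equation*}
and bound the first summand by $n$ via the deterministic polytope argument above. For the second summand I would derive an a priori moment estimate from the update rule. Expanding
\begin{equation*}
\|X_t\|_2^2 \;=\; \|X_{t-1}\|_2^2 + 2\gamma\langle X_{t-1}, U_t\rangle + \gamma^2 \|U_t\|_2^2
\end{equation*}
and using that $U_t$ given $X_0,\dots,X_{t-1}$ is a mean-zero Gaussian in $\mathcal{V}_t$ with $\mathbb{E}[\|U_t\|_2^2\mid X_0,\dots,X_{t-1}]=\dim(\mathcal{V}_t)\leq n$, the cross term vanishes in conditional expectation; telescoping over $t=1,\dots,T=K_1/\gamma^2$ gives the crude bound $\mathbb{E}[\|X_T\|_2^2]\leq\|x_0\|_2^2+K_1 n = O(n)$. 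An analogous telescoping for the fourth moment (using $\mathbb{E}[\|U_t\|_2^4\mid X_0,\dots,X_{t-1}]=O(n^2)$) yields $\mathbb{E}[\|X_T\|_2^4]=\mathrm{poly}(n)$, whereupon Cauchy--Schwarz controls the bad-event contribution:
\begin{equation*}
\mathbb{E}[\|X_T\|_2^2\cdot\mathbf{1}_{X_T\notin\mathcal{P}}] \;\leq\; \sqrt{\mathbb{E}[\|X_T\|_2^4]}\cdot\sqrt{\Pr[X_T\notin\mathcal{P}]} \;\leq\; \mathrm{poly}(n)\cdot(mn)^{-(c-2)/2},
\end{equation*}
which is $o(1)$ once $c$ is chosen large enough in Claim~\ref{Polytope}.

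The main obstacle is conceptual rather than computational: a pure martingale argument alone yields only the weaker estimate $(1+K_1)n$, which falls short of $n$ by a constant factor, so the proof must treat the polytope-containment result from Claim~\ref{Polytope} as a near-deterministic constraint on $\|X_T\|_2^2$ rather than as an in-expectation statement. Once that viewpoint is in place, the remaining work---the a priori second- and fourth-moment bounds and the Cauchy--Schwarz split---is routine bookkeeping, and the ultimate bound $\mathbb{E}[\|X_T\|_2^2]\leq n+o(1)$ matches the stated claim to within slack that is absorbed by the downstream analysis.
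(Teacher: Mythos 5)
Your route is genuinely different from the paper's, and it is worth comparing them. The paper never invokes polytope containment (Claim~\ref{Polytope}) at all; instead it works coordinate by coordinate and exploits the freezing property from Claim~\ref{Ortho}. For each $i$, it conditions on the first time $t$ (if any) at which $i$ enters $C_t^{var}$: before that time $|(X_{t-1})_i| < 1-\delta$, and once $i$ enters $C_t^{var}$ the coordinate is frozen so $(X_T)_i = (X_t)_i = (X_{t-1})_i + \gamma (U_t)_i$. Taking a conditional expectation, the cross term vanishes and one gets $\mathbb{E}[(X_T)_i^2 \mid \cdot] \leq (1-\delta)^2 + \gamma^2 < 1$, hence $\mathbb{E}[\|X_T\|_2^2] \leq n$ by summing over $i$. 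This is self-contained, needs only the second-moment bound $\mathbb{E}[(U_t)_i^2 \mid \mathcal{F}_{t-1}] \leq 1$, and yields the clean bound $\leq n$ with no asymptotic slack.

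Your approach, by contrast, leans on Claim~\ref{Polytope} as a near-deterministic containment statement and then needs both a crude a priori second-moment bound and a fourth-moment bound to control the rare escape event via Cauchy--Schwarz. This is a heavier dependency chain (it makes Claim~\ref{bound-x-t} consume Claim~\ref{Polytope}, which the paper does not) and involves more bookkeeping. More importantly, it only delivers $\mathbb{E}[\|X_T\|_2^2] \leq n + o(1)$, not the stated $\leq n$. You acknowledge this, and you are right that the downstream inequality chain in the proof of Claim 14 has enough room to absorb the extra $o(1)$ (one would then get $\mathbb{E}[|C_T^{var}|] \geq 0.5625n - o(1)$, which still clears the $n/2$ threshold). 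So the shortfall is not fatal, but it does mean the claim as literally stated is not established; to be fully rigorous you would need to either restate the claim with the $o(1)$ slack and propagate it, or switch to the per-coordinate argument, which gets the sharper constant essentially for free.

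One additional nit: you assert $\mathbb{E}[\|U_t\|_2^2 \mid X_0,\dots,X_{t-1}] = \dim(\mathcal{V}_t) \leq n$, which is correct, but when you compute $\mathbb{E}[\|U_t\|_2^4]$ you should be explicit that for $d = \dim(\mathcal{V}_t)$ one has $\mathbb{E}[\|U_t\|_2^4] = d^2 + 2d \leq 3n^2$ (chi-squared fourth moment); the $O(n^2)$ you claim is right, but if the proof is to be written out this is where a reader would want the constant.
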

\begin{proof}
We start by noting that it suffices to show that $\mathbb{E}[(X_T)_i^2] \leq 1$ for all $i \in [n]$, since $\mathbb{E}[\|X_T\|_2^2] = \sum_i \mathbb{E}[(X_T)_i^2]$. We have that
$$\mathbb{E}[(X_T)_i^2] = \mathbb{E}[(X_T)_i^2|i \not \in C_T^{var}]Pr[i \not\in C_T^{var}] + \sum _{t=1}^T \mathbb{E}[(X_T)_i^2|i \in C_t^{var} \setminus C_{t-1}^{var}]Pr[i \in C_t^{var} \setminus C_{t-1}^{var}]$$
Now, we have that clearly $\mathbb{E}[(X_T)_i^2|i \not \in C_T^{var}] \leq 1$. For the rest of the terms, we have:
$$\mathbb{E}[(X_T)_i^2|i \in C_t^{var} \setminus C_{t-1}^{var}] = \mathbb{E}[(X_t)_i^2|i \in C_t^{var} \setminus C_{t-1}^{var}]$$
$$ = \mathbb{E}[((X_{t-1})_i+\gamma(U_t)_i)^2|i \in C_t^{var} \setminus C_{t-1}^{var}] \leq \mathbb{E}[(1-\delta+\gamma(U_t)_i)^2|i \in C_t^{var} \setminus C_{t-1}^{var}]$$
$$ = (1-\delta)^2 + 2(1-\delta)\gamma\mathbb{E}[(U_t)_i] + \gamma^2 \mathbb{E}[(U_t)_i^2]$$
Here, we note that $\mathbb{E}[(U_t)_i^2] = 1$ and $\mathbb{E}[(U_t)_i] = 0$, so we have
$$ = (1-\delta)^2 + \gamma^2 \leq 1-\delta + \gamma < 1$$
by our construction of $\gamma$.
\end{proof}

\begin{claim}
$\mathbb{E}[|C^{var}_T|] \geq 0.56n$.
\end{claim}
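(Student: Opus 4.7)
The plan is to control $\mathbb{E}[\dim \mathcal{V}_T]$ from above and then translate this into a lower bound on $\mathbb{E}[|C_T^{var}|]$ via the preceding bound on $\mathbb{E}[|C_T^{\disc}|]$. The key ingredient is a martingale-type identity linking the expected growth of $\|X_T\|_2^2$ to $\sum_t \mathbb{E}[\dim \mathcal{V}_t]$.

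First I would establish the identity
\[ \mathbb{E}\bigl[\|X_T\|_2^2\bigr] - \|x_0\|_2^2 \;=\; \gamma^2 \sum_{t=1}^T \mathbb{E}\bigl[\dim \mathcal{V}_t\bigr]. \]
This follows by expanding $\|X_t\|_2^2 - \|X_{t-1}\|_2^2 = 2\gamma \langle X_{t-1}, U_t \rangle + \gamma^2 \|U_t\|_2^2$ and conditioning on $X_{t-1}$: the cross term has zero conditional expectation because $U_t \sim \mathcal{N}(\mathcal{V}_t)$ has conditional mean zero, while $\mathbb{E}[\|U_t\|_2^2 \mid X_{t-1}] = \dim \mathcal{V}_t$ since a standard Gaussian on a $d$-dimensional subspace has expected squared norm $d$.

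Combining the identity with Claim \ref{bound-x-t} ($\mathbb{E}[\|X_T\|_2^2] \leq n$) gives $\gamma^2 \sum_{t=1}^T \mathbb{E}[\dim \mathcal{V}_t] \leq n$. By the monotonicity from Claim \ref{Ortho}, $\dim \mathcal{V}_t \geq \dim \mathcal{V}_T$ for all $t \leq T$, so $\gamma^2 T \cdot \mathbb{E}[\dim \mathcal{V}_T] \leq n$; since $T = K_1/\gamma^2$ with $K_1 = 16/3$, this forces $\mathbb{E}[\dim \mathcal{V}_T] \leq 3n/16$. Finally, because $\mathcal{V}_T$ is cut out by at most $|C_T^{var}| + |C_T^{\disc}|$ linear constraints, a dimension count gives $\dim \mathcal{V}_T \geq n - |C_T^{var}| - |C_T^{\disc}|$. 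Taking expectations and applying the preceding claim ($\mathbb{E}[|C_T^{\disc}|] < n/4$) yields $n - \mathbb{E}[|C_T^{var}|] - n/4 \leq 3n/16$, which rearranges to $\mathbb{E}[|C_T^{var}|] \geq 9n/16 > 0.56n$, as desired.

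The main subtlety will be the martingale identity in the first step: because $\mathcal{V}_t$ is itself a random subspace determined by $X_0, \ldots, X_{t-1}$, one must carefully check that, conditional on this history, $U_t$ really is distributed as a standard Gaussian on $\mathcal{V}_t$ with the stated zero conditional mean and conditional second moment equal to $\dim \mathcal{V}_t$. Once this is in place, the remaining steps --- invoking Claims \ref{Ortho} and \ref{bound-x-t}, using the preceding bound on $\mathbb{E}[|C_T^{\disc}|]$, and performing the dimension count --- are routine bookkeeping.
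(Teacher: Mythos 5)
Your proposal is correct and follows essentially the same route as the paper: the telescoping identity $\mathbb{E}[\|X_T\|_2^2] = \|x_0\|_2^2 + \gamma^2\sum_t \mathbb{E}[\dim\mathcal{V}_t]$, the bound $\mathbb{E}[\|X_T\|_2^2] \leq n$ from Claim \ref{bound-x-t}, monotonicity of $\dim\mathcal{V}_t$ from Claim \ref{Ortho}, the dimension count $\dim\mathcal{V}_T \geq n - |C_T^{var}| - |C_T^{\disc}|$, and the bound $\mathbb{E}[|C_T^{\disc}|] < n/4$ all appear in the paper's argument and yield the same constant $9/16 = 0.5625$. Your write-up is actually slightly cleaner in how it states the cross-term vanishing and the conditional second moment.
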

\begin{proof}
We will use the high average norm of $X_t$ and low number of discrepancy constraints broken to demonstrate that the number of variable constraints broken is high with high probability. Note that 
\begin{align*}
    \mathbb{E}[||X_t||^2_2] &= \mathbb{E}[||X_{t-1} + \gamma U_t||_2^2] \\
    &= \mathbb{E}[||X_{t-1}||^2_2] + 2\mathbb{E}[||X_{t-1} \cdot \gamma U_t||_2] + \mathbb{E}[||\gamma U_t||_2^2] \\
    &= \mathbb{E}[||X_{t-1}||^2_2] + \mathbb{E}[||\gamma U_t||_2^2] \\
    &= \mathbb{E}[||X_{t-1}||^2_2] + \gamma^2\mathbb{E}[\text{dim}(\mathcal{V}_t)].
\end{align*}
We use the fact that $\mathbb{E}[U_t]=0$, and we use Claim \ref{subspace} as well. Then, by Claim \ref{bound-x-t}, we have 
\begin{align*}
    n &\geq \mathbb{E}[||X_T||_2^2] \\
    n &\geq \gamma^2 \sum_{t=1}^T \mathbb{E}[\text{dim}(\mathcal{V}_t)] \\
    n &\geq \gamma^2 |T|\mathbb{E}[\text{dim}(\mathcal{V}_T)] \\
    n &\geq K_1 \mathbb{E}[n - |C_T^{var}| - |C_T^{\disc}|] \\
    n/K_1 &\geq \mathbb{E}[n] - \mathbb{E}[|C_T^{var}|] - \mathbb{E}[|C_T^{\disc}|] \\
    \mathbb{E}[|C_T^{var}|] &\geq n (1 - 1/K_1) - n/4 \\
    \mathbb{E}[|C_T^{var}|] &\geq n (1 - 3/16 - 1/4) \\
    \mathbb{E}[|C_T^{var}|] &\geq 0.5625n.
\end{align*}
\end{proof}

Now we can fully prove Lemma \ref{coloring-analysis}. From Claim 14 and the fact that $|C_T^{var}| \leq n$, in the worst case we have that with probability $0.88$, $|C_T^{var}| = n/2$, and with probability $0.12$, $|C_T^{var}| = n$. This maximizes the number of instances where $|C_T^{var}| \leq n/2$, while still maintaining the fact that $\mathbb{E}[|C_T^{var}|] \geq 0.56$. Therefore we have that $Pr[|C_T^{var}| > n/2] \geq 0.12$. Combining this with claim 8 tells us that with probability at least $0.12 - 1/poly(m,n)$ we achieve the partial coloring, and we are done.

\end{document}